\DeclareFontFamily{U}{mathx}{\hyphenchar\font45}
\DeclareFontShape{U}{mathx}{m}{n}{
	<5> <6> <7> <8> <9> <10>
	<10.95> <12> <14.4> <17.28> <20.74> <24.88>
	mathx10
}{}
\DeclareSymbolFont{mathx}{U}{mathx}{m}{n}
\DeclareMathSymbol{\bigtimes}{1}{mathx}{"91}
\newtheorem{theorem}{Theorem}[section]
\newtheorem{prop}[theorem]{Proposition}
\newtheorem{lemma}[theorem]{Lemma}
\begin{document}
	
\title{On variants of multivariate quantum signal processing and their characterizations}

\author[1]{Balázs Németh}
\email{bn273@cam.ac.uk}
\affiliation[1]{Department of Applied Mathematics and Theoretical Physics, University of Cambridge, United Kingdom}
\author[2]{Blanka Kövér}
\author[2]{Boglárka Kulcsár}
\author[2]{Roland Botond Miklósi}
\affiliation[2]{Institute of Mathematics, Faculty of Science, Eötvös Loránd University, Budapest, Hungary}
\author[3]{András Gilyén}
\email{gilyen@renyi.hu}
\affiliation[3]{Alfréd Rényi Institute of Mathematics, HUN-REN, Budapest, Hungary}

\begin{abstract}
Quantum signal processing (QSP) is a highly successful algorithmic primitive in quantum computing which leads to conceptually simple and efficient quantum algorithms using the block-encoding framework of quantum linear algebra. Multivariate variants of quantum signal processing (MQSP) could be a valuable tool in extending earlier results via implementing multivariate (matrix) polynomials. However, MQSP remains much less understood than its single-variate version lacking a clear characterization of ``achievable'' multivariate polynomials. We show that Haah's characterization of general univariate QSP can be extended to homogeneous bivariate (commuting) quantum signal processing. We also show a similar result for an alternative inhomogeneous variant when the degree in one of the variables is at most 1, but construct a counterexample where both variables have degree 2, which in turn refutes an earlier characterization proposed / conjectured by Rossi and Chuang for a related restricted class of MQSP. Finally, we describe homogeneous multivariate (non-commuting) QSP variants that break away from the earlier two-dimensional treatment limited by its reliance on Jordan-like decompositions, and might ultimately lead to the development of novel quantum algorithms.
\end{abstract}

\maketitle

\section{Introduction}

Quantum signal processing~\cite{low2016CompositeQuantGates,low2016HamSimQSignProc} is a highly successful algorithmic primitive in quantum computing, which allows transforming the eigenvalues of potentially exponentially large Hermitian matrices~\cite{low2016HamSimQubitization,low2017HamSimUnifAmp} or more generally the singular values of arbitrary matrices~\cite{gilyen2018QSingValTransf} with efficient quantum circuits utilising only one or two ancilla qubits. These techniques are shown to generalize and achieve close-to optimal implementation of many important quantum algorithms~\cite{gilyen2018QSingValTransf} making the technique an essential tool in designing efficient quantum algorithms~\cite{dalzell2023QuantumAlgSurvey}. 

The matrices should be represented by a so-called block-encoding unitary on the quantum computer, whose top-left corner is the input/output matrix~\cite{low2016HamSimQubitization,chakraborty2018BlockMatrixPowers}. Such matrices can then be added by the linear combination of unitaries (LCU) technique~\cite{childs2012HamSimLCU}, and can also be multiplied with ease~\cite{chakraborty2018BlockMatrixPowers,gilyen2018QSingValTransf}. These techniques together imply that arbitrary (multivariate) polynomials of the input matrices can be implemented efficiently with a number of uses of the input block-encodings that roughly matches the degree of the polynomial. However, since the matrices are represented by block-encoding unitaries, the output matrices must be normalised, and naively chaining the above operations can lead to serious sub-normalisation. The sub-normalisation later can only be fixed at a cost that is roughly linear in the amplification factor~\cite{low2017HamSimUnifAmp,gilyen2018QSingValTransf}, so it is crucial to avoid any unnecessary sub-normalisation in constructing efficient quantum algorithms.

Quantum signal processing~\cite{low2016HamSimQSignProc} (QSP) and its extension to quantum singular value transformation~\cite{gilyen2018QSingValTransf} (QSVT) typically only introduce sub-normalisation that is required for making the output matrix have operator norm at most $1$ so that it can be block-encoded. However, if the separate terms of a multivariate polynomial are plainly summed via LCU, then much unnecessary sub-normalisation may be introduced. The induced overhead could be significant and may be proportional to the number of terms -- which could be very large. Moreover, LCU might require several ancilla qubits, while QSP and QSVT typically only introduces one or two additional ancilla qubits.

Multivariate quantum signal processing (MQSP)~\cite{rossi2022multivariableQSP} is hoped to resolve these issues in some cases by directly combining the terms of a multivariate polynomial by utilising a circuit analogous to ordinary quantum signal processing, where multiple different input matrices are used in predetermined locations instead of a single input matrix. A major barrier to applying MQSP is that the set of ``achievable'' multivariate polynomial transformations is severely limited, moreover no good characterization was known for them. We provide the first such characterization for the case of homogeneous bivariate (commuting) quantum signal processing, which can be viewed as a common generalization of the univariate characterizations of \cite{haah2018ProdDecPerFuncQSignPRoc,motlagh2023GeneralQSP}. We also describe general homogeneous multivariate (non-commuting) variants of MQSP, but leave it as an open question whether the necessary conditions that straightforwardly generalize those of the (commuting) bivariate case are also sufficient for guaranteeing the existence of a product decomposition.

A similar characterization was attempted for an alternative inhomogeneous bivariate version of MQSP in~\cite{rossi2022multivariableQSP}, relying on a conjecture about the structure of the appearing multivariate polynomials. In this paper we prove that their conjecture holds for bivariate polynomials which have degree at most $1$ in one of the variables, but does not hold in general, as we demonstrate by an explicitly constructed $2+2$-degree polynomial.
In fact, we consider a more general MQSP protocol then the one described in~\cite{rossi2022multivariableQSP}, following the generic treatment of~\cite{haah2018ProdDecPerFuncQSignPRoc,chao2020FindingAngleSequences} similar to \cite{motlagh2023GeneralQSP}. This makes our results more general, but also applicable to the restricted version studied in~\cite{rossi2022multivariableQSP}.
On top of analytical proofs, we developed a Python program package (see \autoref{apx:program}) to aid finding counterexamples, which we also made publicly available~\cite{gitHubRepo2023}. 

\section{Review of the univariate case}

We follow the presentation of~\cite{haah2018ProdDecPerFuncQSignPRoc,chao2020FindingAngleSequences} in our description of quantum signal processing.
We assume access\footnote{As discussed in~\cite{haah2018ProdDecPerFuncQSignPRoc}, if one only has access to controlled-$W$ (and its inverse), it can still be viewed as access to $V=\ketbra{0}{0}\otimes \sqrt{W}+\ketbra{1}{1}\otimes \sqrt{W}^{\dagger}$, in case one uses $V$ an even number of times. Therefore, arbitrary parity Laurent polynomials of $W$ might be implemented by using controlled-$W$ and its inverse.} to a black-box unitary $V:=\ketbra{0}{0}\otimes W+\ketbra{1}{1}\otimes W^{\dagger}$. 
Let us consider a fixed eigenvector $\ket{\theta}$ of $W$ with eigenvalue $t=e^{i\theta}\in\mathrm{U}(1)$, then the subspace spanned by $\ket{0}\ket{\theta}$, $\ket{1}\ket{\theta}$ is invariant under $V$.
Focusing on this subspace, one can show that the following
unitary transformations of the input signal $t$ can be implemented within this subspace, by acting on the first qubit with modulation unitaries $U_i$:
\begin{equation}
F(t)=U_{0}\begin{pmatrix}t & 0\\
0 & t^{-1}
\end{pmatrix}U_{1}\cdot\ldots\cdot U_{d-1}\begin{pmatrix}t & 0\\
0 & t^{-1}
\end{pmatrix}U_{d}\text{, where }U_{0},\dots,U_{d}\in\mathrm{SU}(2).\label{eq:univ_laur}
\end{equation}
Since $t$ represents an arbitrary eigenvalue of $W$, we can think of $F(t)$ as a matrix-valued Laurent polynomial in
the indeterminate $t$ having three important properties:
\begin{enumerate}
\item $\deg F(t)\leq d$, where the degree of a Laurent polynomial is the
maximum of the degrees of $t$ and $t^{-1}$ in it. \label{it:prop1}
\item $F(t)\in\mathrm{SU}(2)\quad\forall t\in\mathrm{U(1)}$.
\item $F(-t)=(-1)^d F(t)$.\label{it:prop3}
\end{enumerate}
Conversely, if $F(t)$ is a matrix-valued Laurent polynomial in $t$ satisfying the above properties, 
then $\exists U_{0},\dots,U_{d}\in\mathrm{SU}(2)$ such that \eqref{eq:univ_laur} holds, as shown by~\cite[Theorem~2]{haah2018ProdDecPerFuncQSignPRoc}.
This converse result is what renders quantum signal processing extremely powerful: 
it suffices to find an approximating polynomial for any transformation desired
to be applied, and it is guaranteed that the polynomial can be efficiently implemented using simple quantum circuits if the above properties \ref{it:prop1}-\ref{it:prop3} are satisfied. 
Increasingly efficient as well as numerically stable algorithms that are able to compute the $U_{i}$ unitary matrices have also been found~\cite{haah2018ProdDecPerFuncQSignPRoc,dong2020efficientPhaseFindingInQSP,chao2020FindingAngleSequences,wang2021energyLandscapeSymmetricQSP,motlagh2023GeneralQSP}.

Some further remarks about \eqref{eq:univ_laur}:
\begin{itemize}
\item Observe that for a given degree-$d$ Laurent polynomial $F(t)$ satisfying the above three
properties, the decomposition \eqref{eq:univ_laur} is not unique
because diagonal elements of $\mathrm{SU}(2)$ commute. To overcome
this ambiguity, Haah in \cite{haah2018ProdDecPerFuncQSignPRoc} has proposed the following product decomposition instead:
\begin{equation}\label{eq:haah_decomp}
F(t)=E_{0}E_{P_{1}}(t)E_{P_{2}}(t)\cdot\dots\cdot E_{P_{d}}(t),
\end{equation}
 where $E_{0}\in\mathrm{SU}(2)$, the $P_{i}$-s are rank-one projection matrices, 
 and the so-called \textit{primitive matrices} are defined as:
\begin{equation}
	E_{P_i}(t)=tP_i+t^{-1}(I-P_i)\in\mathrm{SU}(2).
\end{equation}The decomposition in \eqref{eq:haah_decomp} is unique, moreover, there is a simple recursive expression for the projection matrices $P_i$ in terms of the coefficients of the polynomial $F(t)$: let
\begin{equation}
	F(t)=\sum_{j=-d}^dC_jt^j,
\end{equation}
where the $C_j$ are $2\times 2$ complex matrices with $C_d\neq 0$, then $P_d=C_d^\dagger C_d/\mathrm{Tr}\left[C_d^\dagger C_d\right]$.
\item In many applications, it is of interest to restrict attention to certain more structured products than \eqref{eq:univ_laur}. For example, one might assume that $U_k=e^{i\phi_k\sigma_X}$ for $\phi_k\in\mathbb{R}$ and $\sigma_X=\begin{bmatrix}
	0&1\\1&0
\end{bmatrix}$~\cite{low2016CompositeQuantGates,gilyen2018QSingValTransf}.\footnote{In \cite{gilyen2018QSingValTransf} the ``signal'' is assumed to be a Pauli-$X$ rotation and the modulation is in the form of Pauli-$Z$ rotations, whereas here role of the Pauli-$Z$ and $X$ rotations are reversed. Nevertheless, the two treatments are equivalent up to a change of basis.} In this case, the entries of the matrix polynomial $F(t)$ (viewed as Laurent polynomials in $t$ with complex coefficients) obey additional symmetries. Let
\begin{align}
    \begin{split}
        F(t)&=e^{i\phi_0\sigma_X}\begin{pmatrix}
		t&0\\0&t^{-1}\end{pmatrix}e^{i\phi_1\sigma_X}\cdot\ldots\cdot e^{i\phi_{d-1}\sigma_X}\begin{pmatrix}
		t&0\\0&t^{-1}\end{pmatrix}e^{i\phi_d\sigma_X}\\
        &=\begin{pmatrix}
		P(t)&Q(t)\\
		-Q^*\left(t^{-1}\right)&P^*\left(t^{-1}\right)
	\end{pmatrix},\label{eq:struct_prod}
    \end{split}
\end{align}

then by induction on $d$ one can show that
\begin{equation}\label{eq:parities}
    P(t^{-1})=(-1)^dP\left(t\right),\quad Q(t^{-1})=(-1)^{d-1}Q\left(t\right).
\end{equation}
In turn, the satisfaction of these additional symmetries \eqref{eq:parities} and the above given properties \ref{it:prop1}-\ref{it:prop3} are sufficient to ensure that the corresponding $F(t)$ can be written in the form \eqref{eq:struct_prod} for some angles $\phi_0,\dots,\phi_d\in\mathbb{R}$~\cite{gilyen2018QSingValTransf,haah2018ProdDecPerFuncQSignPRoc}.

\end{itemize}

\section{Problem formulation of the multivariate homogeneous case}
A possible bivariate generalization of QSP is to consider what happens when one has access to a black-box unitary $V:=\ketbra{0}{0}\otimes W_1+\ketbra{1}{1}\otimes W_2$. Assuming that $W_1$ and $W_2$ commute, we once again consider a fixed common eigenvector $\ket{\theta}$ with corresponding eigenvalues $a=e^{i\theta}$ and $b=e^{i\varphi}$ respectively. The input signal then becomes a pair $(a,b)\in \mathrm{U(1)}\times\mathrm{U(1)}$ and the following polynomial transformations can be realized by repeated calls to the black-box unitaries modulated with fixed unitary gates on the first qubit:

\begin{equation}\label{eq:laur_bi_hom}
		F(a,b)=U_0\begin{pmatrix}a&0\\0&b\end{pmatrix}U_1
		\cdot\ldots\cdot
		U_{d-1}\begin{pmatrix}a&0\\0&b\end{pmatrix}U_d
		\text{, where }U_i\in\mathrm{SU(2)}.
\end{equation}

$F(a,b)$ is now a homogeneous bivariate polynomial in the variables $a,b$ characterized by the properties summarized in the theorem below, providing a common generalization to the results of~\cite{haah2018ProdDecPerFuncQSignPRoc,motlagh2023GeneralQSP}:

\begin{theorem}[Characterization of homogeneous commutative bivariate QSP]\label{thm:hombichar}
	The product Eq.~\eqref{eq:laur_bi_hom} describes a matrix-valued bivariate polynomial $F(a,b)$ such that

	\begin{enumerate}[label=(\roman*)]
		\item \begin{math}
			F(a,b)\text{ is a homogeneous degree-}d \text{ polynomial}. \label{bihompr1}
		\end{math}
		\item $F(a,b)\in\mathrm{U(2)}\quad\forall (a,b)\in\mathrm{U(1)}\times\mathrm{U(1)}.\label{bihompr2}$
		\item $\mathrm{det}(F(a,b))=(ab)^d.\label{bihompr3}$
	\end{enumerate}
	Conversely, any polynomial $F(a,b)$ satisfying \ref{bihompr1}-\ref{bihompr3} can be written in the from of Eq.~\eqref{eq:laur_bi_hom}.
\end{theorem}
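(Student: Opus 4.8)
The plan is to prove the forward direction by directly inspecting \eqref{eq:laur_bi_hom}, and to prove the converse by ``de-homogenizing'' $F$ along the slice $(a,b)=(r,r^{-1})$ so as to reduce it to Haah's univariate characterization \cite[Theorem~2]{haah2018ProdDecPerFuncQSignPRoc}, and then ``re-homogenizing'' the product decomposition this yields. For the forward direction: each factor $\operatorname{diag}(a,b)$ contributes total degree exactly $1$ to every monomial it produces, so a product of $d$ such factors interleaved with constant matrices is a matrix-valued \emph{homogeneous} polynomial of degree $d$, which gives \ref{bihompr1}; when $(a,b)\in\mathrm{U}(1)\times\mathrm{U}(1)$ every factor is unitary, so $F(a,b)\in\mathrm{U}(2)$, which gives \ref{bihompr2}; and since $\det\operatorname{diag}(a,b)=ab$ while $\det U_i=1$, multiplicativity of the determinant gives $\det F(a,b)=(ab)^d$, which is \ref{bihompr3}.

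For the converse I would write $F(a,b)=\sum_{k=0}^{d}C_k\,a^k b^{d-k}$ with $2\times2$ complex coefficient matrices $C_k$ (possible by \ref{bihompr1}) and consider the univariate matrix-valued Laurent polynomial $G(r):=F(r,r^{-1})=\sum_{k=0}^{d}C_k\,r^{2k-d}$. The claim is that $G$ satisfies the three hypotheses of Haah's theorem with the \emph{same} degree parameter $d$: its Laurent degree is at most $d$ because the exponents $2k-d$ all lie in $\{-d,-d+2,\dots,d\}$; for $r\in\mathrm{U}(1)$ we have $(r,r^{-1})\in\mathrm{U}(1)\times\mathrm{U}(1)$, so $G(r)\in\mathrm{U}(2)$ by \ref{bihompr2} and $\det G(r)=(r\cdot r^{-1})^d=1$ by \ref{bihompr3}, whence $G(r)\in\mathrm{SU}(2)$; and $G(-r)=\sum_k C_k(-1)^{2k-d}r^{2k-d}=(-1)^dG(r)$ because $(-1)^{2k-d}=(-1)^d$. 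Haah's theorem then supplies $U_0,\dots,U_d\in\mathrm{SU}(2)$ with $G(r)=U_0\operatorname{diag}(r,r^{-1})U_1\cdots U_{d-1}\operatorname{diag}(r,r^{-1})U_d$.

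It remains to transfer this decomposition back to the bivariate setting. I would set $\widetilde F(a,b):=U_0\operatorname{diag}(a,b)U_1\cdots U_{d-1}\operatorname{diag}(a,b)U_d$; by the already-proved forward direction $\widetilde F$ is a homogeneous bivariate polynomial of degree $d$, say $\widetilde F(a,b)=\sum_{k=0}^{d}\widetilde C_k\,a^k b^{d-k}$, and by construction $\widetilde F(r,r^{-1})=G(r)=\sum_{k=0}^{d}C_k\,r^{2k-d}$ for all $r\in\mathrm{U}(1)$. Since the exponents $2k-d$ for $0\le k\le d$ are pairwise distinct, the map sending a homogeneous degree-$d$ bivariate polynomial to its restriction to the slice $\{(r,r^{-1})\}$ is injective on coefficients; comparing coefficients of $r^{2k-d}$ over the infinite set $r\in\mathrm{U}(1)$ forces $\widetilde C_k=C_k$ for every $k$, i.e.\ $F=\widetilde F$, which is exactly the form \eqref{eq:laur_bi_hom}.

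I expect the only genuinely delicate point to be the bookkeeping of the degree parameter: one must verify that the single integer $d$ can simultaneously serve as the homogeneity degree of $F$, the Laurent-degree bound \emph{and} the parity index that Haah's theorem demands of $G$, and the number of signal factors that theorem returns; and, symmetrically, that re-homogenizing a parity-$d$ univariate QSP polynomial of Laurent degree at most $d$ never introduces a negative power of $a$ or $b$, so that $\widetilde F$ really is again an honest homogeneous polynomial of degree $d$. Once this is in place, the whole argument is a routine unwinding of definitions, and the homogeneous commutative bivariate theory collapses onto the univariate one.
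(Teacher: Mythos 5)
Your proof is correct and follows essentially the same approach as the paper's: both reduce the converse to Haah's univariate theorem by passing to the same univariate Laurent polynomial $\sum_k C_k t^{2k-d}$ (the paper writes it as $F(a,b)/(ab)^{d/2}$ in the variable $t=\sqrt{a/b}$, you write it as $G(r)=F(r,r^{-1})$, which is literally the same expression) and then re-homogenizes by reinserting $\operatorname{diag}(a,b)$ for $\operatorname{diag}(t,t^{-1})$. Your coefficient-comparison step is a slightly more explicit justification of the paper's brief ``as can be seen after replacing the matrices'' remark, but the argument is identical in substance.
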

\begin{proof}
	``$\Longrightarrow\colon$''
	\begin{enumerate}
	\item[\ref{bihompr1}] $U_i$ are constants for all $i\in\{0,1,\ldots, d\}$, while $\mathrm{diag}(a,b)$ has homogeneous degree $1$.
	\item[\ref{bihompr2}] $U_i\in \mathrm{SU(2)}$ for all $i\in\{0,1,\ldots, d\}$, while $\mathrm{diag}(a,b)\in\mathrm{U(2)}\quad\forall (a,b)\in\mathrm{U(1)}\times\mathrm{U(1)}$.
	\item[\ref{bihompr3}] Follows from the multiplicativity of determinants.		
\end{enumerate}
``$\Longleftarrow\colon$'' Consider $\tilde{F}(\sqrt{a/b},\sqrt{b/a}):=F(a,b)/(ab)^{\frac{d}{2}}$ as a degree $\leq d$ Laurent polynomial in the indeterminate $t:=\sqrt{a/b}$. Clearly, $\tilde{F}$ satisfies Properties \ref{it:prop1}-\ref{it:prop3} by construction and as a consequence of \ref{bihompr1}-\ref{bihompr3}. Thus $\exists U_{0},\dots,U_{d}\in\mathrm{SU}(2)$ such that Eq.~\eqref{eq:univ_laur} holds for $\tilde{F}(t,t^{-1})$, as shown by~\cite[Theorem~2]{haah2018ProdDecPerFuncQSignPRoc}. Finally, we can obtain $F(a,b)=(ab)^{\frac{d}{2}}\tilde{F}(\sqrt{a/b},\sqrt{b/a})$ using the same sequence of unitaries $U_{0},\dots,U_{d}$ as can be seen after replacing the matrices 
\begin{align*}
	\begin{pmatrix}t&0\\0&t^{-1}\end{pmatrix} \quad \text{ by } \quad \sqrt{ab}\begin{pmatrix}t&0\\0&t^{-1}\end{pmatrix}=\begin{pmatrix}a&0\\0&b \end{pmatrix}. \tag*{\qedhere}
\end{align*}
\end{proof}

One can view the above as a common generalization of the univariate characterizations of \cite{haah2018ProdDecPerFuncQSignPRoc,motlagh2023GeneralQSP}, however it might be more accurate to think about \cite[Theorem~2]{haah2018ProdDecPerFuncQSignPRoc}, \cite[Theorem 3]{motlagh2023GeneralQSP}, and  \autoref{thm:hombichar} as roughly equivalent statements, since the non-trivial ``$\Longleftarrow$'' direction of each can be used to prove that of the others using a reduction similar to the above proof.

\subsection{The non-commuting multivariate case}

We can ask the question what happens when the matrices $W_1$ and $W_2$ do not commute or we have more than two input unitaries $W_i$. In the non-commuting case we should interpret Eq.~\eqref{eq:laur_bi_hom} as a homogeneous bivariate polynomial in non-commuting variables $a,b$, meaning that we should separately treat the coefficients of expressions like $ab$ and $ba$. Similarly if we add more variables we get

\begin{equation}\label{eq:multi_hom}
	F(x_1,\ldots,x_n)=U_0\mathrm{diag}(x_1,\ldots,x_n)U_1
	\cdot\ldots\cdot
	\mathrm{diag}(x_1,\ldots,x_n)U_d\text{, where }U_i\in\mathrm{SU(n)}.
\end{equation}

In the non-commuting case we should interpret $F(x_1,\ldots,x_n)$ as a matrix-valued polynomial in the non-commuting variables $x_1,\ldots,x_n$, i.e., 
\begin{equation*}
	F(x_1,\ldots,x_n)=\sum_{j\in [n]^d}C_j \prod_{i=1}^d x_1^{\delta_{1j_i}}\cdot\ldots\cdot x_n^{\delta_{nj_i}}\text{, where }C_j\in\mathbb{C}^{n\times n}\text{ and }\delta_{\ell k}\text{ is the Kronecker-}\delta.
\end{equation*}
We can substitute operators $X_1,\ldots, X_n \in \mathrm{End}(\mathcal{H})$ of a Hilbert space $\mathcal{H}$ into such a polynomial as follows:
\begin{equation*}
	F(X_1,\ldots,X_n)=\sum_{j\in [n]^d}C_j \otimes\prod_{i=1}^d X_1^{\delta_{1j_i}}\cdot\ldots\cdot X_n^{\delta_{nj_i}}.
\end{equation*}

In the above $\mathrm{End}(\mathcal{H})$ denotes the set of endomorphisms of $\mathcal{H}$. Let $\mathrm{U}(\mathcal{H})\subset \mathrm{End}(\mathcal{H})$ denote the set of unitary operators on $\mathcal{H}$.
 In the non-commuting case we get necessary conditions analogous to \autoref{thm:hombichar}. 

\begin{theorem}[Necessary conditions on multivariate homogeneous non-commuting QSP]\label{thm:noncommhombichar}
	The product Eq.~\eqref{eq:multi_hom} describes a matrix-valued multivariate (non-commutative) polynomial $F(x_1,\ldots,x_n)$ such that
	
	\begin{enumerate}[label=\roman*)]
		\item \begin{math}
			F(x_1,\ldots,x_n)\text{ is a (non-commutative) homogeneous degree-}d \text{ polynomial}. 
		\end{math} \label{noncommbihompr1}
		\item $F(X_1,\ldots,X_n)\in\mathrm{U}(\mathbb{C}^n \otimes \mathcal{H})\quad\forall (X_1,\ldots,X_n)\in\bigtimes_{\ell=1}^n \mathrm{U}(\mathcal{H})\quad\forall \mathcal{H}\text{ Hilbert space}.$ \label{noncommbihompr2}
		\item $\mathrm{det}(F(X_1,\ldots,X_n))=(\mathrm{det}(X_1\cdot\ldots\cdot X_n))^d\quad\forall (X_1,\ldots,X_n)\in\bigtimes_{\ell=1}^n \mathbb{C}^{k\times k}\quad\forall k\in \mathbb{N}.$ \label{noncommbihompr3}
	\end{enumerate}
\end{theorem}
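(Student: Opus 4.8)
The plan is to reduce all three claims to a single structural identity. Write $\mathrm{diag}(x_1,\dots,x_n)=\sum_{\ell=1}^n x_\ell E_{\ell\ell}$, where $E_{\ell\ell}\in\mathbb{C}^{n\times n}$ is the matrix unit with a single $1$ in position $(\ell,\ell)$, and for a tuple $(X_1,\dots,X_n)\in\mathrm{End}(\mathcal{H})^n$ set $D_X:=\sum_{\ell=1}^n E_{\ell\ell}\otimes X_\ell\in\mathrm{End}(\mathbb{C}^n\otimes\mathcal{H})$, i.e.\ the block-diagonal operator $\mathrm{diag}(X_1,\dots,X_n)$. Expanding the product in Eq.~\eqref{eq:multi_hom}, collecting monomials to read off the coefficients $C_j$, and then substituting $x_\ell\mapsto X_\ell$ monomial by monomial, one checks by a direct computation that
\[
F(X_1,\dots,X_n)=(U_0\otimes I_{\mathcal{H}})\,D_X\,(U_1\otimes I_{\mathcal{H}})\cdot\ldots\cdot D_X\,(U_d\otimes I_{\mathcal{H}}),
\]
with exactly $d$ factors $D_X$; in other words, substitution of operators intertwines the formal product with the corresponding product of lifted operators. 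All three properties then follow by inspecting the right-hand side.

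For property~\ref{noncommbihompr1}, each $U_i\otimes I_{\mathcal{H}}$ is a constant and each $D_X$ is homogeneous of degree $1$ in the variables $x_\ell$, so the product of the $d$ factors $D_X$ together with the $d+1$ constants is homogeneous of degree $d$; in particular $F$ is a well-defined element of the free algebra on $x_1,\dots,x_n$ supported in degree exactly $d$. For property~\ref{noncommbihompr2}, when every $X_\ell\in\mathrm{U}(\mathcal{H})$ the operator $D_X$ is block diagonal with the unitary blocks $X_1,\dots,X_n$ on its diagonal, hence unitary, and each $U_i\otimes I_{\mathcal{H}}$ is unitary because $U_i\in\mathrm{SU}(n)\subset\mathrm{U}(n)$; a product of unitaries is unitary, so $F(X_1,\dots,X_n)\in\mathrm{U}(\mathbb{C}^n\otimes\mathcal{H})$.

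For property~\ref{noncommbihompr3}, take $X_\ell\in\mathbb{C}^{k\times k}$ arbitrary and apply multiplicativity of the determinant to the same factorization. We have $\det(U_i\otimes I_k)=\det(U_i)^k=1$ since $U_i\in\mathrm{SU}(n)$, while $\det D_X=\prod_{\ell=1}^n\det X_\ell=\det(X_1\cdot\ldots\cdot X_n)$ because $D_X$ is block diagonal (using multiplicativity of $\det$ on the $k\times k$ factors on the right-hand side). Multiplying the $d+1$ constant factors and the $d$ copies of $D_X$ gives $\det F(X_1,\dots,X_n)=\big(\det(X_1\cdot\ldots\cdot X_n)\big)^d$, as claimed.

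There is no genuine obstacle in this direction: the only point requiring care is making the substitution of operators into a non-commutative polynomial precise — that it is a unital ring homomorphism on the degree-$\le d$ part of the free algebra (with $\mathbb{C}^{n\times n}$-coefficients lifted via $\otimes I_{\mathcal{H}}$) — so that the displayed factorization is genuinely preserved under evaluation. (The real difficulty, which we do \emph{not} address here, is the converse: whether conditions \ref{noncommbihompr1}--\ref{noncommbihompr3} already force the existence of a product decomposition~\eqref{eq:multi_hom}, for which the univariate reduction to a Laurent variable $t=\sqrt{a/b}$ and \cite[Theorem~2]{haah2018ProdDecPerFuncQSignPRoc} used in \autoref{thm:hombichar} is no longer available.)
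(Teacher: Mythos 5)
Your proof is correct and takes essentially the same approach as the paper's: the paper's own proof is a terse three-item enumeration making exactly the same observations (each $U_i\otimes I$ is a constant unitary, $\mathrm{diag}(X_1,\dots,X_n)$ is degree-1/unitary/block-diagonal, and multiplicativity of determinants gives property~iii), while you additionally spell out the structural identity $F(X_1,\dots,X_n)=(U_0\otimes I)D_X(U_1\otimes I)\cdots D_X(U_d\otimes I)$, a step the paper treats as self-evident.
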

\begin{proof}
	\phantom{``$\Longrightarrow\colon$''}
	\begin{enumerate}
		\item[\ref{noncommbihompr1}] $U_i$ are constants for all $i\in\{0,\ldots, d\}$ and $\mathrm{diag}(x_1,\ldots,x_n)$ has homogeneous degree~$1$.
		\item[\ref{noncommbihompr2}] $U_i\otimes I_{\mathcal{H}}$ and $\mathrm{diag}(X_1,\ldots,X_n)$ are in $\mathrm{U}(\mathbb{C}^n \otimes \mathcal{H})\quad\forall (X_1,\ldots,X_n)\in\bigtimes_{\ell=1}^n \mathrm{U}(\mathcal{H})$.
		\item[\ref{noncommbihompr3}] Follows from the multiplicativity of determinants since $U_i\otimes I_k\in \mathrm{SU(n \cdot k)}$.	\qedhere	
	\end{enumerate}
\end{proof}

It is natural to ask whether the converse direction of \autoref{thm:noncommhombichar} holds similarly to \autoref{thm:hombichar}. 
In the bivariate case when the coefficient of $x_1^d$ or $x_2^d$ is non-zero, then the commutative polynomial corresponding to $F(x_1,x_2)$ has an essentially unique representation due to Haah's result Eq.~\eqref{eq:haah_decomp}, which in turn also describes a corresponding non-commutative polynomial $\tilde{F}(x_1,x_2)$. The question becomes whether the commutative version of a bivariate polynomial $F(x_1,x_2)$ uniquely determines its non-commuting counterpart under the above conditions. 
We leave these as open questions for further study.

\section{Problem formulation of the alternative multivariate case}
A possible alternative bivariate generalization of QSP along the lines of~\cite{rossi2022multivariableQSP} is to consider what happens when one has access to two black-box unitaries $V_i:=\ketbra{0}{0}\otimes W_i+\ketbra{1}{1}\otimes W_i^{\dagger}$ for $i\in \{1,2\}$. Assuming that $W_1$ and $W_2$ commute, we once again consider a fixed common eigenvector $\ket{\theta}$ with  corresponding eigenvalues $a=e^{i\theta}$ and $b=e^{i\varphi}$ respectively. The input signal then becomes a pair $(a,b)\in \mathrm{U(1)}\times\mathrm{U(1)}$ and the following polynomial transformations can be realized by repeated calls to the black-box unitaries modulated with fixed unitary gates on the first qubit:

\begin{equation}\label{eq:laur_bi}
	F(a,b)=U_0\begin{pmatrix}c_1&0\\0&c_1^{-1}
	\end{pmatrix}U_1\cdot\ldots\cdot\begin{pmatrix}
	c_d&0\\0&c_d^{-1}
	\end{pmatrix}U_d\text{, where }U_i\in\mathrm{SU(2)},c_i\in\{a,b\}.
\end{equation}

$F(a,b)$ is now a bivariate Laurent polynomial in the variables $a,b$ satisfying the properties summarized in the theorem below, providing a common generalization to the results of~\cite{haah2018ProdDecPerFuncQSignPRoc,rossi2022multivariableQSP}:
\begin{theorem}[Necessary conditions for product decomposition]\label{binec}
	Suppose $F(a,b)$ is a matrix-valued bivariate Laurent polynomial that can be written as a product \eqref{eq:laur_bi}. Let $(d_a,d_b)\in\mathbb{N}\times\mathbb{N}$ be the number of times $c_i=a$ and $c_i=b$ respectively, then $F(a,b)$ satisfies:
	
	\begin{enumerate}
		\item \begin{math}
			\deg_a F(a,b)\leq d_a,\quad\deg_b F(a,b)\leq d_b
		\end{math}
		\item $F(a,b)\in\mathrm{SU(2)}\quad\forall (a,b)\in\mathrm{U(1)}\times\mathrm{U(1)}$\label{bipr2}
		\item $F(-a,b)=(-1)^{d_a}F(a,b),\quad F(a,-b)=(-1)^{d_b}F(a,b)$\label{bipr3}
	\end{enumerate}
\end{theorem}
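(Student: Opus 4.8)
The plan is to establish the three properties directly from the product representation~\eqref{eq:laur_bi}, in exactly the same spirit as the ``$\Longrightarrow$'' directions of \autoref{thm:hombichar} and \autoref{thm:noncommhombichar}. Write $F(a,b)=U_0 D_1 U_1 D_2 \cdots D_d U_d$, where $D_i=\mathrm{diag}(c_i,c_i^{-1})$ with $c_i\in\{a,b\}$, and the $U_0,\dots,U_d\in\mathrm{SU}(2)$ are constant matrices; by hypothesis exactly $d_a$ of the $D_i$ have $c_i=a$ and $d_b$ have $c_i=b$.

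For the degree bound I would expand the matrix product entrywise: every matrix element of $F$ is a sum of ``path'' contributions through the factors, and each time a path traverses a diagonal factor $D_i$ it picks up either $c_i^{+1}$ or $c_i^{-1}$, while the constant $U_i$ contribute no powers of $a$ or $b$. Hence in each monomial the exponent of $a$ is a $\pm1$-signed sum over precisely the $d_a$ factors with $c_i=a$, so it lies in $\{-d_a,\dots,d_a\}$; recalling that $\deg_a$ is the maximum over all matrix entries of the larger of the highest power of $a$ and the highest power of $a^{-1}$, this gives $\deg_a F\le d_a$, and symmetrically $\deg_b F\le d_b$.

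Property~\ref{bipr2} is immediate: for $(a,b)\in\mathrm{U}(1)\times\mathrm{U}(1)$ each $D_i=\mathrm{diag}(c_i,c_i^{-1})$ is unitary with unit determinant, hence lies in $\mathrm{SU}(2)$, the $U_i$ lie in $\mathrm{SU}(2)$ by assumption, and $\mathrm{SU}(2)$ is closed under multiplication. For property~\ref{bipr3}, the key observation is the scalar identity $\mathrm{diag}(-a,(-a)^{-1})=-\,\mathrm{diag}(a,a^{-1})$: substituting $a\mapsto-a$ multiplies each of the $d_a$ diagonal factors with $c_i=a$ by $-1$ and leaves the $U_i$ and the $D_i$ with $c_i=b$ unchanged, so collecting signs yields $F(-a,b)=(-1)^{d_a}F(a,b)$ on the torus, which then promotes to an identity of Laurent polynomials; the statement for $b$ follows by exchanging the roles of $a$ and $b$.

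There is no genuine obstacle here — the proof is entirely routine, mirroring the ``$\Longrightarrow$'' arguments already given for the homogeneous variants. The only points that need a little care are the bookkeeping of powers for the Laurent-degree claim and the (harmless) remark that the parity identities, derived pointwise on $\mathrm{U}(1)\times\mathrm{U}(1)$, upgrade to polynomial identities because both sides are Laurent polynomials.
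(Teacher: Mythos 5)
Your proof is correct. The paper omits a proof of this theorem as routine, and your argument supplies exactly the expected one: it mirrors the ``$\Longrightarrow$'' directions of \autoref{thm:hombichar} and \autoref{thm:noncommhombichar} (degree bound from path-counting through the diagonal factors, $\mathrm{SU}(2)$-valuedness from closure under products, and parity from $\mathrm{diag}(-a,(-a)^{-1})=-\mathrm{diag}(a,a^{-1})$ applied to the $d_a$ relevant factors), with the only mildly non-obvious point — upgrading the pointwise torus identity to a Laurent-polynomial identity — handled correctly.
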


It is a natural questions to ask, by analogy with the univariate case and \autoref{thm:hombichar}, if the conditions in \autoref{binec} are also sufficient for the existence of a product decomposition \eqref{eq:laur_bi}. The answer to this question is positive if the degree is no more than $1$ in one of the two variables, but negative in general, as illustrated by the theorem below. This also refutes the main conjecture of~\cite[Conjecture~2.1]{rossi2022multivariableQSP}, since the presented counterexample possesses the additional symmetries required, as described by \cite{mori2023}.

\begin{theorem}\label{main}
	Suppose $F(a,b)$ is a bivariate Laurent polynomial in $a,b$ satisfying the properties listed in \autoref{binec}. If $\deg_a F(a,b)\leq 1$ or $\deg_b F(a,b)\leq 1$ then $F(a,b)$ can always be written as a product \eqref{eq:laur_bi}, but not necessarily if both $\deg_a F(a,b)\geq 2$ and $\deg_b F(a,b)\geq 2$. An explicit counterexample is given by:
	\begin{equation}
		F_{2,2}(a,b)=\begin{pmatrix}
			P_{2,2}(a,b)&Q_{2,2}(a,b)\\
			-Q_{2,2}^*\left(a^{-1},b^{-1}\right)&P_{2,2}^*\left(a^{-1},b^{-1}\right)
		\end{pmatrix},
	\end{equation}
	where
	\begin{multline}
		P_{2,2}(a,b)=\frac{6}{25}\sqrt{\frac{37}{493}}\Bigg
		[a^2b^2+a^{-2}b^{-2}-\left(\frac{122}{37}+\frac{8i}{37}\right)\left(b^2+b^{-2}\right)+\\ +\left(\frac{114}{37}+\frac{56i}{37}\right)\left(a^{-2}b^{2}+a^2b^{-2}\right)
		+\left(\frac{362}{111}-\frac{248i}{111}\right)\left(a^2+a^{-2}\right)+\frac{692}{111}-\frac{719i}{222}\Bigg],
	\end{multline}
	\begin{multline}
		Q_{2,2}(a,b)=\frac{6}{25}\sqrt{\frac{37}{493}}\Bigg
		[a^2b^2-a^{-2}b^{-2}-\left(\frac{122}{37}+\frac{66i}{37}\right)\left(b^2-b^{-2}\right)+\\ +\left(\frac{56}{37}+\frac{114i}{37}\right)\left(a^{-2}b^2-a^2b^{-2}\right)
		+\left(\frac{362}{111}-\frac{418i}{111}\right)\left(a^2-a^{-2}\right)\Bigg].
	\end{multline}
\end{theorem}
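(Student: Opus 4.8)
The plan is to treat the two halves of the statement separately.

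For the positive half it suffices, by the symmetry of \eqref{eq:laur_bi} and of the properties in \autoref{binec} under swapping $a\leftrightarrow b$, to handle the case $\deg_b F\le 1$. If $\deg_b F=0$ the claim is \cite[Theorem~2]{haah2018ProdDecPerFuncQSignPRoc}, so assume $\deg_b F=1$; then property~\ref{bipr3} kills the $b^0$-term and $F(a,b)=bF_+(a)+b^{-1}F_-(a)$. Expanding $FF^\dagger=F^\dagger F=I$ on the torus (fixing $a$, varying $b$) gives the Laurent-polynomial identities $F_\pm F_\mp^\dagger=F_\pm^\dagger F_\mp=0$ and $F_+F_+^\dagger+F_-F_-^\dagger=F_+^\dagger F_++F_-^\dagger F_-=I$, where $\dagger$ means conjugating the coefficients and sending $a\mapsto a^{-1}$. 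A $\det$ argument excludes $F_\pm(a)=0$ on $\mathrm{U}(1)$, so $\Pi(a):=F_+(a)F_+(a)^\dagger$ is a genuine rank-one projection for every $a\in\mathrm{U}(1)$, is a Laurent polynomial, and is even in $a$. The crucial identity is then $F(a,b)=E_{\Pi(a)}(b)\,G(a)$ with $G(a):=F(a,1)\in\mathrm{SU}(2)$, where $E_P(b)=bP+b^{-1}(I-P)$ and one uses the partial-isometry relations to get $\Pi G=F_+$, $(I-\Pi)G=F_-$. Now $G(a)$ decomposes into $a$-factors by \cite[Theorem~2]{haah2018ProdDecPerFuncQSignPRoc}, and it remains to write $E_{\Pi(a)}(b)$ as an $a$-product times one $b$-factor times an $a$-product. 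For this I would produce an even $\mathrm{SU}(2)$-valued Laurent polynomial $R(a)$ with $R(a)\ketbra{0}{0}R(a)^\dagger=\Pi(a)$, so that $E_{\Pi(a)}(b)=R(a)\,\mathrm{diag}(b,b^{-1})\,R(a)^\dagger$; then $R$ and $R^\dagger$ each decompose via \cite[Theorem~2]{haah2018ProdDecPerFuncQSignPRoc} — for $R^\dagger$ one rewrites $\mathrm{diag}(a^{-1},a)=(-i\sigma_X)\,\mathrm{diag}(a,a^{-1})\,(i\sigma_X)$ with both conjugators in $\mathrm{SU}(2)$ — and concatenation yields \eqref{eq:laur_bi} with $d_b=1$. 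Producing $R(a)$ amounts to finding a unit-length Laurent-polynomial vector representing $\Pi$: writing $\Pi(a)=\tilde\Pi(a^2)$, the matrix $\tilde\Pi$ is a rank-one idempotent over the PID $\mathbb{C}[s,s^{-1}]$, so its image is a free rank-one module whose generator has, by $\mathrm{Tr}\,\tilde\Pi\equiv 1$, constant unit squared norm on $\mathrm{U}(1)$ (the only para-unitary Laurent polynomials being unimodular monomials).

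For the counterexample I would first check that $F_{2,2}$ meets the three conditions of \autoref{binec} with $(d_a,d_b)=(2,2)$: the degrees and parities are read off from the exponents, and condition~\ref{bipr2} is the single torus identity $|P_{2,2}|^2+|Q_{2,2}|^2\equiv 1$ — a finite computation for which the prefactor $\tfrac{6}{25}\sqrt{37/493}$ has been tuned, and which we also verify with the accompanying program. The heart of the matter is that $F_{2,2}$ admits \emph{no} product \eqref{eq:laur_bi} at all. I would prove a reduction lemma: if $F$ has a decomposition \eqref{eq:laur_bi} with parameters $(d_a,d_b)$ and $d_b>\deg_b F$, then it has one with $(d_a,d_b-2)$ (and symmetrically in $a$). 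Writing the decomposition as $W_0\,\mathrm{diag}(b,b^{-1})\,W_1\cdots\mathrm{diag}(b,b^{-1})\,W_{d_b}$ with each $W_i=W_i(a)$ a product of $a$-factors and constants, the $b^{d_b}$-coefficient equals $\big(\prod_{i=1}^{d_b-1}(W_i(a))_{11}\big)\,W_0(a)\ketbra{0}{0}W_{d_b}(a)$; its forced vanishing makes $(W_k(a))_{11}\equiv 0$ for some interior $k$, and since $W_k(a)\in\mathrm{SU}(2)$ on $\mathrm{U}(1)$ this forces $W_k$ to be anti-diagonal as a Laurent polynomial, whence $\mathrm{diag}(b,b^{-1})\,W_k\,\mathrm{diag}(b,b^{-1})=W_k$ and two $b$-factors can be deleted (parity ensures $d_b-\deg_b F$ is even, so one iterates down to $\deg_b F$, and likewise for $a$). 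Thus any hypothetical decomposition of $F_{2,2}$ reduces to one with $(d_a,d_b)=(2,2)$; for such a decomposition the $b^2$-coefficient is again $(\text{scalar in }a)\cdot W_0(a)\ketbra{0}{0}W_2(a)$, so it has rank $\le 1$ and its $2\times 2$ determinant vanishes identically in $a$ — contradicting the direct computation that the determinant of the $b^2$-coefficient of $F_{2,2}$ is a non-zero Laurent polynomial in $a$. Finally, since $F_{2,2}$ has the block form $\left(\begin{smallmatrix}P&Q\\-Q^*&P^*\end{smallmatrix}\right)$ it satisfies the extra symmetry of \cite[Conjecture~2.1]{rossi2022multivariableQSP} (cf.\ \cite{mori2023}), which is therefore refuted.

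The step I expect to be hardest in the positive direction is the passage from ``$\Pi(a)$ is a polynomial family of rank-one projections'' to ``$\Pi$ has a polynomial unit-vector representative'': only a continuous section is obvious a priori, the polynomial statement needs the Fejér–Riesz/PID input above, and the accompanying parity bookkeeping must be arranged so that the resulting $R(a)$ actually satisfies the hypotheses of \cite[Theorem~2]{haah2018ProdDecPerFuncQSignPRoc}. In the negative direction the reduction lemma — in particular the observation that a vanishing top $b$-coefficient forces an interior anti-diagonal factor and hence the cancellation $\mathrm{diag}(b,b^{-1})\,W\,\mathrm{diag}(b,b^{-1})=W$ — is the key insight that makes ``no decomposition of any length'' tractable; after that the only real content is the explicit determinant, whose coefficients have been engineered precisely so that it stays non-zero while all the necessary conditions (and the extra symmetry) continue to hold.
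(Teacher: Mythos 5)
Your proposal is substantively correct and tracks the paper's overall architecture in the positive direction while taking a genuinely different route to the counterexample.

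For the positive half, the paper fixes $b$ and applies Haah's univariate decomposition in $a$, obtaining $F(a,b)=E_0(b)E_{\Pi(b)}(a)$ with $\Pi(b)=[F(1,b)]^\dagger M_1(b)$ an even, polynomial, rank-one-projection-valued function, and then proves by an explicit degree-peeling induction that any such $\Pi$ can be written as $U(b)\ketbra{\psi}{\psi}U(b)^\dagger$ with $U$ a polynomial $\mathrm{SU}(2)$-valued function. Your factorization $F(a,b)=E_{\Pi(a)}(b)\,G(a)$ with $\Pi=F_+F_+^\dagger$ and $G=F(a,1)$ is the same idea with the roles of the single $b$-factor and the $a$-polynomial frame reversed, and your relations $\Pi G=F_+$, $(I-\Pi)G=F_-$ check out. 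What genuinely differs is how you produce the polynomial unitary frame $R(a)$: rather than the paper's induction you invoke that $\tilde\Pi$ is a rank-one idempotent over the PID $\mathbb C[s,s^{-1}]$, so its image is free of rank one. This is an attractive structural argument, but as stated it has a gap: $\mathrm{Tr}\,\tilde\Pi\equiv 1$ alone does \emph{not} force an arbitrary generator $v(s)$ of the image to have constant squared norm. You need to pick a generator with $\gcd(v_1,v_2)=1$; then writing $\tilde\Pi=v\,w^T$ and using that $w$ is a Laurent-polynomial row vector with $w=\bar v/\|v\|^2$ on $\mathrm U(1)$ gives $\|v\|^2\mid v_j^\sharp$ for each $j$, hence $\|v\|^2$ is a unit of $\mathbb C[s,s^{-1}]$, hence (being $\sharp$-invariant and positive) a positive constant. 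With that fix, and a matching complementary column obtained from $\ker\tilde\Pi$, one gets an even $\mathrm{SU}(2)$-valued $R(a)$, and the rest of your assembly via Haah's theorem and the $\sigma_X$ trick for $R^\dagger$ is fine.

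For the counterexample, the paper proves a different ``further necessary condition'' (its Lemma~\ref{lem:furth_nec}): examining the four corner coefficients $M_{\pm d_a,\pm d_b}$ and peeling off the last factor of a putative decomposition, it shows $M_{d_a,d_b}M_{d_a,-d_b}^\dagger=0$ or $M_{d_a,-d_b}M_{-d_a,-d_b}^\dagger=0$, and then checks numerically that $F_{2,2}$ violates both. Your reduction lemma — if $d_b>\deg_b F$, some interior $(W_k)_{11}\equiv 0$, hence $W_k$ is antidiagonal and $D(b)W_k D(b)=W_k$, so two $b$-factors cancel — is a clean and correct observation, and it does useful work: the paper's peel-off-the-last-factor step tacitly uses that the decomposition length in each variable matches the degree, which your lemma justifies rather than assumes. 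After reduction you then use the obstruction ``$\det$ of the top $b$-coefficient (as a Laurent polynomial in $a$) must vanish identically,'' which is a different invariant from the paper's. This is legitimate, but it does put the entire weight of the disproof on a single explicit computation (the $\det$ of $M_{2,2}a^2+M_{0,2}+M_{-2,2}a^{-2}$) that you assert without displaying; the paper's two small $2\times 2$ products are somewhat more self-contained. In short: same skeleton for the positive direction with a different (and, once patched, arguably slicker) proof of the key projection-factorization proposition; a different and in one respect more careful obstruction argument for the negative direction.
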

\begin{proof}
	First we show that if $\deg_a F(a,b)\leq 1$ or $\deg_b F(a,b)\leq 1$ then $F(a,b)$ can always be written as a product \eqref{eq:laur_bi}. If $\deg_a F(a,b)=0$ or $\deg_b F(a,b)=0$ then the result follows immediately from the univariate case. Assume therefore without loss of generality that $\deg_a F(a,b)=1$, let $d=\deg_b F(a,b)$, and write
	\begin{equation}
		F(a,b)=M_1(b)a+M_{-1}(b)a^{-1}
	\end{equation}
	for some matrix-valued univariate Laurent polynomials $M_1(b)$ and $M_{-1}(b)$. Note also that for fixed $b\in\mathrm{U(1)}$ we can view $F(a,b)$ as a univariate matrix-valued Laurent polynomial in $a$. Using the Haah decomposition theorem we can write:
	\begin{equation}\label{eq:deg1dec}
		F(a,b)=E_0(b)E_{\Pi(b)}(a)
	\end{equation}
	for $E_0(b)\in\mathrm{SU(2)}$ and $\Pi(b)$ a rank-one projection matrix, both of which are functions of $b$. Observe that:
	\begin{equation}
		E_0(b)=F(1,b),\quad \Pi(b)=\left[E_0(b)\right]^\dagger M_1(b)=\left[F(1,b)\right]^\dagger M_1(b),
	\end{equation}
	hence both $E_0(b)$ and $\Pi(b)$ are univariate matrix-valued Laurent polynomials in the variable $b$. Note also that:
	\begin{equation}
		\Pi(-b)=\left[F(1,-b)\right]^\dagger M_1(-b)=(-1)^d\left[F(1,b)\right]^\dagger (-1)^dM_1(b)=\Pi(b),
	\end{equation}
	therefore $\Pi(b)$ is an even polynomial. As the following proposition shows, such polynomials are quite special:
	
	\begin{prop}
		Suppose $\Pi(b)\in\mathrm{Mat}(2;\mathbb{C})\left[b,b^{-1}\right]$ is a degree $2k$ Laurent polynomial such that $\Pi(b)$ is a rank-one projection matrix for any $b\in\mathrm{U(1)}$. There exists a Laurent polynomial $U(b)\in\mathrm{Mat}(2;\mathbb{C})\left[b,b^{-1}\right]$ of degree $k$ and a non-zero $\ket\psi\in\mathbb{C}^2$ such that $U(b)\in\mathrm{SU(2)}$ and $\Pi(b)=U(b)\ketbra{\psi}{\psi}\left[U(b)\right]^\dagger$ for $\forall b\in\mathrm{U(1)}$. 
	\end{prop}
	\begin{proof}
		We will prove the proposition by induction on $k$. The base case $k=0$ is trivial because $\Pi(b)$ is then constant. Suppose therefore that we have proved the statement for polynomials of degree smaller than $2k\geq 2$. Let
		\begin{equation}
			\Pi(b)=\sum_{j=-2k}^{2k}\Pi_jb^j
		\end{equation}
				be a polynomial satisfying the properties listed in the proposition. As $\Pi(b)$ is a projection matrix for all $b\in\mathrm{U(1)}$, we have:
		\begin{equation}
			\Pi(b)=\left[\Pi(b)\right]^\dagger=\Pi^\dagger\left(b^{-1}\right). 
		\end{equation}
		This implies that $\Pi_j=\Pi^{\dagger}_{-j}$ for all even integers $j$ between $-2k$ and $2k$. Furthermore,
		\begin{equation}\label{eq:proj_sq}
			\Pi(b)\Pi(b)=\Pi(b)
		\end{equation}
		for all $b\in\mathrm U(1)$. Comparing the leading coefficients on both sides we find that $\Pi_{2k}^2=\Pi_{-2k}^2=0$. As the degree of $\Pi(b)$ is $2k$, this implies $\Pi_{2k}\neq 0$ and $\Pi_{-2k}=\Pi_{2k}^\dagger\neq 0$, so we deduce that $\Pi_{2k}$ is a $2\times 2$ matrix of rank one with complex entries. As such, it can be written in the form $\Pi_{2k}=\ketbra{\psi}{\phi}$ 
		for some non-zero vectors $\ket{\psi}$, $\ket\phi\in\mathbb{C}^2$. In addition, since $\Pi_{2k}^2=0$ these vectors are also orthogonal to each other. Define the following projection matrices:
		\begin{equation}
			S=\frac{\ketbra{\phi}{\phi}}{\braket{\phi}{\phi}},\quad T=I-S=\frac{\ketbra{\psi}{\psi}}{\braket{\psi}{\psi}}.
		\end{equation}
		Let \begin{equation}\label{eq:proj_decr}\tilde{\Pi}(b)=E_S(b)\Pi(b)\left[E_S(b)\right]^\dagger.\end{equation}
		We will show that $\tilde{\Pi}(b)$ has strictly smaller degree than $2k$ and it satisfies all the properties listed in the proposition. It is clear that $\tilde{\Pi}(b)$ is a matrix-valued Laurent polynomial in $b$. It is a projection for $b\in\mathrm{U(1)}$ because it is obtained by conjugating a projection matrix $\Pi(b)$ by a unitary matrix $E_S(b)$. It is also an even polynomial because it is a product of two odd  polynomials $E_S(b)$ and $\left[E_S(b)\right]^\dagger$ and an even polynomial $\Pi(b)$.
		
		Let us now expand the right hand side of \eqref{eq:proj_decr}:
		\begin{align}
            \begin{split}
                \tilde{\Pi}(b)&=\left(bS+b^{-1}T\right)\left(\Pi_{2k}b^{2k}+\ldots+\Pi_{-2k}b^{-2k}\right)\left(bT+b^{-1}S\right)\\
            &=(S\Pi_{2k}T)b^{2k+2}+(S\Pi_{2k}S+T\Pi_{2k}T+S\Pi_{2k-2}T)b^{2k}+\ldots\\
            &\ldots+(T\Pi_{-2k}T+S\Pi_{-2k}S+T\Pi_{-(2k-2)}S)b^{-2k}+T\Pi_{-2k}Sb^{-(2k+2)}
            \end{split}
		\end{align}
		
		Note that $S\Pi_{2k}=\Pi_{2k}T=\Pi_{-2k}S=T\Pi_{2k}=0$ due to the orthogonality condition $\braket{\psi}{\phi}=0$. Therefore in order to prove that $\deg \tilde{\Pi}(b)<2k$ it suffices to show that $S\Pi_{2k-2}T=T\Pi_{-(2k-2)}S=0$. Comparing coefficients of $b^{4k-2}$ in \eqref{eq:proj_sq} we find:
		\begin{equation}\label{eq:proj_lead_2}
			\Pi_{2k}\Pi_{2k-2}+\Pi_{2k-2}\Pi_{2k}=\begin{cases}
				0&\text{ if }k > 1\\
				\Pi_{2k}&\text{ if }k = 1
			\end{cases}.
		\end{equation}
		Combining \eqref{eq:proj_lead_2} and $\Pi_{2k}^2=0$ we find:
		\begin{equation}\label{eq:proj_ort}
			0=\Pi_{2k}\Pi_{2k-2}\Pi_{2k}=\ketbra{\psi}{\phi}\Pi_{2k-2}\ketbra{\psi}{\phi}=\bra{\phi}\Pi_{2k-2}\ket{\psi}\Pi_{2k}
		\end{equation}
		Hence $\bra{\phi}\Pi_{2k-2}\ket{\psi}=0$, thus $S\Pi_{2k-2}T=\ketbra{\phi}{\phi}\Pi_{2k-2}\ketbra{\psi}{\psi}=0$. Finally, $T\Pi_{-(2k-2)}S=\left(S\Pi_{2k-2}T\right)^\dagger=0$, therefore $\deg\tilde{\Pi}(b)<2k$. Using the induction hypothesis we may express $\tilde{\Pi}(b)$ as a product and insert it in \eqref{eq:proj_decr} to obtain a product decomposition of $\Pi(b)$ which concludes the proof of the proposition.
	\end{proof}	
	Using the proposition, we can write $\Pi(b)=U(b)K\left[U(b)\right]^\dagger$ in \eqref{eq:deg1dec} for a constant rank-one projection matrix $K=\ketbra{\psi}{\psi}$ to find:
	\begin{equation}
		F(a,b)=E_0(b)E_{U(b)K\left[U(b)\right]^\dagger}(a)=E_0(b)U(b)E_K(a)\left[U(b)\right]^\dagger,
	\end{equation}	
	which is equivalent to a product decomposition of the form \eqref{eq:univ_laur}, which proves the first part of the theorem.
	
	It follows from the above that in order to find a counterexample, it is necessary to have $\deg_a F(a,b)\geq 2$ and $\deg_b F(a,b)\geq 2$. From now on, we will concentrate on the case when $\deg_a F(a,b)=\deg_b F(a,b)=2$ and look for a counterexample in the form	\begin{equation}\label{eq:Fentries}
		F(a,b)=\begin{pmatrix}
			P(a,b)&Q(a,b)\\
			-Q^*\left(a^{-1},b^{-1}\right)&P^*\left(a^{-1},b^{-1}\right)
		\end{pmatrix},
	\end{equation}
	where $P(a,b)$ and $Q(a,b)$ are scalar-valued Laurent polynomials of degree $2$ in both variables $a$ and $b$ satisfying the additional symmetries
	\begin{equation}\label{eq:poly_sym}
		P\left(a^{-1},b^{-1}\right)=P(a,b),\quad Q\left(a^{-1},b^{-1}\right)=-Q(a,b).
	\end{equation}
	The form of $F(a,b)$ in \eqref{eq:Fentries} comes from the fact that for any $U\in\mathrm{SU}(2)$ we have $U_{11}=U^*_{22}$ and $U_{12}=-U^*_{21}$, while the symmetries \eqref{eq:poly_sym} are motivated by the fact that the entries of any $F(a,b)$ that can be written as a product of the form \eqref{eq:laur_bi} when all the $U_i$-s being Pauli-Z rotations obey these. In particular, such a pair of polynomials provide an explicit counterexample to ~\cite[Conjecture~2.1]{rossi2022multivariableQSP}. Furthermore, they reduce the number of independent coefficients in $F(a,b)$. Combining the parity constraint on $F(a,b)$ and \eqref{eq:poly_sym} we may expand: 
	\begin{align}
		P(a,b)&=\left(Aa^2+B+Ca^{-2}\right)b^2+Da^2+E+Da^{-2}+\left(Aa^{-2}+B+Ca^2\right)b^{-2}\\
		Q(a,b)&=\left(Fa^2+G+Ha^{-2}\right)b^2+I\left(a^2-a^{-2}\right)-\left(Fa^{-2}+G+Ha^2\right)b^{-2},
	\end{align}
	where $A,B,\ldots, H\in\mathbb{C}$. As before, we can think of $P(a,b)$ and $Q(a,b)$ as univariate polynomials in $a$ or $b$ with univariate polynomial coefficients in $b$ or $a$ by writing:
	\begin{align}
		P(a,b)&=P_2^a\left(a^2\right)b^2+P_0^a\left(a^2\right)+P_{-2}^a\left(a^2\right)b^{-2}=P_2^b\left(b^2\right)a^2+P_0^b\left(b^2\right)+P_{-2}^b\left(b^2\right)a^2\\
		Q(a,b)&=Q_2^a\left(a^2\right)b^2+Q_0^a\left(a^2\right)+Q_{-2}^a\left(a^2\right)b^{-2}=Q_2^b\left(b^2\right)a^2+Q_0^b\left(b^2\right)+Q_{-2}^b\left(b^2\right)a^{-2},
	\end{align}
	where the univariate polynomials $P_2^a(t),\dots,Q_{-2}^b(t)$ are defined as
	\begin{align}
		P_2^a(t)=P_{-2}^a\left(t^{-1}\right)&=At+B+Ct^{-1}=At^{-1}\left(t-\alpha_0\right)\left(t-\alpha_1\right)\\
		Q_2^a(t)=Q_{-2}^a\left(t^{-1}\right)&=Ft+G+Ht^{-1}=Ft^{-1}\left(t-\beta_0\right)\left(t-\beta_1\right)\\
		P_0^a(t)&=D\left(t+t^{-1}\right)+E\\
		Q_0^a(t)&=I\left(t-t^{-1}\right)\\
		P_2^b(t)=P_{-2}^b\left(t^{-1}\right)&=At+D+Ct^{-1}=At^{-1}\left(t-\gamma_0\right)\left(t-\gamma_1\right)\\
		Q_2^b(t)=Q_{-2}^b\left(t^{-1}\right)&=Ft+I-Ht^{-1}=Ft^{-1}\left(t-\delta_0\right)\left(t-\delta_1\right)\\
		P_0^b(t)&=B\left(t+t^{-1}\right)+E\\
		Q_0^b(t)&=G\left(t-t^{-1}\right),
	\end{align}
	with $\alpha_0,\alpha_1,\beta_0,\beta_1,\gamma_0,\gamma_1,\delta_0,\delta_1\in\mathbb{C}$ being the roots of $P_2^a(t),Q_2^a(t),P_2^b(t)$ and $Q_2^b(t)$, respectively.
	Property \ref{bipr2} in \autoref{binec} is captured by the following polynomial equation:
	\begin{equation}\label{eq:matpoly_unit_eq}
		F(a,b)\left[F(a,b)\right]^\dagger=F(a,b)F^\dagger\left(a^{-1},b^{-1}\right)=I
	\end{equation}
	In terms of the polynomials $P(a,b)$ and $Q(a,b)$ \eqref{eq:matpoly_unit_eq} reads, using \eqref{eq:poly_sym}:
	\begin{equation}\label{eq:pol_unit_sym}
		|P(a,b)|^2+|Q(a,b)|^2=P(a,b)P^*(a,b)-Q(a,b)Q^*(a,b)=1.
	\end{equation}
	By looking at the coefficient of $a^4b^4$, we find that $|A|^2=|F|^2$. Comparing the coefficients of $a^4$ and $b^4$ on both sides of \eqref{eq:pol_unit_sym} we find:
	\begin{align}
		P_2^a(t)\left(P_2^a\right)^*(t)&=Q_2^a(t)\left(Q_2^a\right)^*(t)\label{eq:abs_eqa}\\
		P_2^b(t)\left(P_2^b\right)^*(t)&=Q_2^b(t)\left(Q_2^b\right)^*(t)\label{eq:abs_eqb}.
	\end{align}
	Equation \eqref{eq:abs_eqa} implies that the pairs of roots $\{\alpha_0,\alpha_1\}$ and $\{\beta_0,\beta_1\}$ are equal up to complex conjugation and analogously for $\{\gamma_0,\gamma_1\}$ and $\{\delta_0,\delta_1\}$. This is a point in the proof where a crucial difference shows up between the univariate and bivariate case: for univariate $F(t)$, equations \eqref{eq:abs_eqa} and \eqref{eq:abs_eqb} relate complex numbers (leading coefficients of entries of $F(t)$) rather than polynomials, and while the equality of the magnitudes of two complex numbers implies that they differ only by phase, this is no longer true for polynomials.
	Let us look for a counterexample by setting
	\begin{equation}
		\beta_0=\alpha_0,\beta_1=\alpha_1^*,\quad\delta_0=\gamma_0,\delta_1=\gamma_1^*
	\end{equation}
	Observe that we can express all coefficients $A,\ldots,H$ with the exception of $E$ in terms of $A,F,\alpha_0,\alpha_1,\gamma_0,\gamma_1$ as follows:
	\begin{align}
		B&=-A(\alpha_0+\alpha_1)\\
		C&=A\alpha_0\alpha_1=A\gamma_0\gamma_1\label{eq:cexpr}\\
		D&=-A(\gamma_0+\gamma_1)\\
		G&=-F(\alpha_0+\alpha_1^*)\\
		H&=F\alpha_0\alpha_1^*=-F\gamma_0\gamma_1^*\label{eq:hexpr}\\
		I&=-F(\gamma_0+\gamma_1^*)		
	\end{align}
	From \eqref{eq:cexpr} and \eqref{eq:hexpr} we deduce that:
	\begin{align}
		\alpha_0\alpha_1&=\gamma_0\gamma_1\\
		\alpha_0\alpha_1^*&=-\gamma_0\gamma_1^*.
	\end{align}
	This implies (assuming $\alpha_0,\alpha_1,\gamma_0,\gamma_1$ are non-zero):
	\begin{equation}
		(\gamma_1/\alpha_1)^*+\gamma_1/\alpha_1=(\gamma_0/\alpha_0)^*+\gamma_0/\alpha_0=0,
	\end{equation}
	which means that
	\begin{align}
		\gamma_0&=ik\alpha_0\\
		\gamma_1&=\alpha_1/(ik)
	\end{align}
	for some non-zero real number $k\in\mathbb{R}$. Looking at the coefficients of the monomials $a^2b^2$, $a^2b^{-2}$, $a^2$ and $b^2$ on both sides of \eqref{eq:pol_unit_sym} we learn:
	\begin{align}
		0&=AE^*+BD^*+DB^*+EA^*-GI^*-IG^*\label{eq:firsteq}\\
		0&=CE^*+C^*E+BD^*+DB^*+IG^*+I^*G\\
		0&=AB^*+BC^*+DE^*+ED^*+BA^*+CB^*+FG^*+GH^*+GF^*+HG^*\\
		0&=AD^*+BE^*+CD^*+DA^*+EB^*+DC^*+FI^*-HI^*+F^*I-H^*I.\label{eq:lasteq}
	\end{align}
	Let:
	\begin{align}
		r&:=GI^*+IG^*-BD^*-DB^*\\
		s&:=-(GI^*+IG^*+BD^*+DB^*)\\
		t&:=-\left(AB^*+BA^*+BC^*+CB^*+FG^*+GF^*+GH^*+HG^*\right)\\
		u&:=IH^*+HI^*-FI^*-IF^*-AD^*-DA^*-CD^*-DC^*.
	\end{align}
	Our systems of equations \eqref{eq:firsteq}-\eqref{eq:lasteq} for $E$ can then be written more compactly as:
	\begin{align}
		AE^*+EA^*=r\label{eq:firstlineeq}\\
		CE^*+EC^*=s\\
		DE^*+ED^*=t\\
		BE^*+EB^*=u.\label{eq:lastlineeq}
	\end{align}
	Equations \eqref{eq:firstlineeq}-\eqref{eq:lastlineeq} have a simple geometric interpretation, namely that $E$ lies at the intersection of four straight lines on the complex plane. These four lines intersect at the same point if:
	\begin{equation}\label{eq:final_eq}
		\frac{rC-sA}{CA^*-AC^*}=\frac{rD-tA}{DA^*-AD^*}=\frac{rB-uA}{BA^*-AB^*}.
	\end{equation}
	In order to find a solution, we need to choose $\alpha_0$, $\alpha_1$ and $k$ such that \eqref{eq:final_eq} holds. For fixed $\alpha_0\in\mathbb{C}$ and $k\in\mathbb{R}$ equation \eqref{eq:final_eq} constitutes a system of polynomial equations for the unknown real and imaginary parts of $\alpha_1$. The following exact solution to this system of equations was obtained using Wolfram Mathematica with the choices $\alpha_0=1+i,k=3$:
	\begin{align}
		\alpha_1&=\frac{85}{37}-\frac{29i}{37}\label{eq:alph}\\
		\frac{E}{A}&=\frac{692}{111}-\frac{719i}{222}\label{eq:const_ratio}.
	\end{align}
	To specify the polynomials completely, the only thing remaining is to fix the constant term which determines $|A|$. (Observe that the phases of $A$ and $F$ are arbitrary, because if $P(a,b),Q(a,b)$ is a solution, so is $e^{i\varphi}P(a,b), e^{i\vartheta}Q(a,b)$ for arbitrary angles $\varphi$ and $\vartheta$.) From \eqref{eq:pol_unit_sym} we have:
	\begin{equation}
		2|A|^2+2|B|^2+2|C|^2+2|D|^2+|E|^2+2|F|^2+2|G|^2+2|H|^2+2|I|^2=1.
	\end{equation}
	Substituting the above results we find that:
	\begin{equation}
		|A|=\frac{6}{25}\sqrt{\frac{37}{493}}
	\end{equation}
	Finally, the full form of the polynomials $P(a,b),Q(a,b)$ is (for arbitrary angles $\varphi,\vartheta$ as discussed above):
	\begin{multline}
		P(a,b)=\frac{6e^{i\varphi}}{25}\sqrt{\frac{37}{493}}\Bigg
		[a^2b^2+a^{-2}b^{-2}-\left(\frac{122}{37}+\frac{8i}{37}\right)\left(b^2+b^{-2}\right)+\\ \left(\frac{114}{37}+\frac{56i}{37}\right)\left(a^{-2}b^{2}+a^2b^{-2}\right)
		+\left(\frac{362}{111}-\frac{248i}{111}\right)\left(a^2+a^{-2}\right)+\frac{692}{111}-\frac{719i}{222}\Bigg]
	\end{multline}
	\begin{multline}
		Q(a,b)=\frac{6e^{i\vartheta}}{25}\sqrt{\frac{37}{493}}\Bigg
		[a^2b^2-a^{-2}b^{-2}-\left(\frac{122}{37}+\frac{66i}{37}\right)\left(b^2-b^{-2}\right)+\\+\left(\frac{56}{37}+\frac{114i}{37}\right)\left(a^{-2}b^2-a^2b^{-2}\right)
		+\left(\frac{362}{111}-\frac{418i}{111}\right)\left(a^2-a^{-2}\right)\Bigg]
	\end{multline}
	In order to show without doubt that this is a valid counterexample, we give a necessary condition on the coefficients of a decomposable polynomial \eqref{eq:laur_bi} and show that our example does not satisfy this condition.
	\begin{lemma}\label{lem:furth_nec}
		Let $F(a,b)$ be a matrix-valued bivariate Laurent polynomial that can be decomposed as a product \eqref{eq:laur_bi}. Write
		\begin{equation}
			F(a,b)=\sum_{j=-d_a}^{d_a}\sum_{k=-d_b}^{d_b}M_{j,k}a^jb^k,
		\end{equation}
		where $d_a=\deg_a F(a,b), d_b=\deg_b  F(a,b)$. Then:
		\begin{equation}
			M_{d_a,d_b}M_{d_a,-d_b}^\dagger=0\text{ or }M_{d_a,-d_b}M_{-d_a,-d_b}^\dagger=0.
		\end{equation}
	\end{lemma}
	\begin{proof}
		As a result of Property \ref{bipr2} in \autoref{binec} $F(a,b)$ satisfies the following: 
		\begin{align}
			F(a,b)F^\dagger\left(a^{-1},b^{-1}\right)&=I\\
			F^\dagger\left(a^{-1},b^{-1}\right)F(a,b)&=I\\
			\det F(a,b)&=1
		\end{align}
		Comparing the leading coefficients on both sides of these equations we learn that
		\begin{equation}
			M_{d_a,d_b}^\dagger M_{-d_a,-d_b}=M_{-d_a,-d_b}M_{d_a,d_b}^\dagger=M_{d_a,-d_b}^\dagger M_{-d_a,d_b}=M_{-d_a,d_b}M_{d_a,-d_b}^\dagger=0
		\end{equation}
		and also that:
		\begin{equation}\label{eq:det_eq}
			\det M_{\pm d_a,\pm d_b}=0.
		\end{equation}
		Assuming that none of $M_{\pm d_a,\pm d_b}$ is zero, equation \eqref{eq:det_eq} implies that all have rank one, thus we may write:
		\begin{align}
			M_{d_a,d_b}&=\alpha\ketbra{x}{z}\\
			M_{-d_a,-d_b}&=\beta\ketbra{y}{w}\\
			M_{d_a,-d_b}&=\gamma\ketbra{p}{r}\\
			M_{-d_a,d_b}&=\delta\ketbra{q}{s},
		\end{align}
		where $\alpha,\dots,\delta\in\mathbb{C}$ are complex numbers and $\ket x,\dots,\ket s\in\mathbb{C}^2$ are two-dimensional complex unit vectors satisfying
		\begin{equation}
			\braket{x}{y}=\braket{z}{w}=\braket{p}{q}=\braket{r}{s}=0.
		\end{equation}
		If $F(a,b)$ can be decomposed as a product \eqref{eq:laur_bi}, then there exists a matrix $U\in\mathrm{SU}(2)$ and a choice of variable $c\in\{a,b\}$ such that the polynomial
		\begin{equation}\label{eq:deg_decr}
			F(a,b)U^\dagger\begin{pmatrix}
				c^{-1}&0\\
				0&c
			\end{pmatrix}
		\end{equation}
		has smaller degree in the variable $c$ than $F(a,b)$. Suppose that $c=a$ and expanding \eqref{eq:deg_decr}:
		\begin{equation}\label{eq:deg_decr_exp}
			\begin{split}
				F(a,b)U^\dagger\begin{pmatrix}
					a^{-1}&0\\
					0&a
				\end{pmatrix}&=\left(M_{\pm d_a,\pm d_b}a^{\pm d_a}b^{\pm d_b}+\dots\right)U^\dagger\left(a^{-1}\ketbra{0}{0}+a\ketbra{1}{1}\right)=\\
				&=M_{d_a,d_b}U^\dagger\ketbra{1}{1}a^{d_a+1}b^{d_b}+M_{-d_a,-d_b}U^\dagger\ketbra{0}{0}a^{-d_a-1}b^{-d_b}+\\
				&+M_{d_a,-d_b}U^\dagger\ketbra{1}{1}a^{d_a+1}b^{-d_b}+M_{-d_a,d_b}U^\dagger\ketbra{0}{0}a^{-d_a-1}b^{d_b}+\dots
			\end{split}
		\end{equation}
		In order for the degree in $a$ to decrease in \eqref{eq:deg_decr_exp}, all four terms on the RHS have to vanish, yielding:
		\begin{equation}
			M_{d_a,d_b}U^\dagger\ket{1}=M_{-d_a,-d_b}U^\dagger\ket{0}=M_{d_a,-d_b}U^\dagger\ket{1}=M_{-d_a,d_b}U^\dagger\ket{0}=0.
		\end{equation}
		This means that:
		\begin{equation}
			\mel{1}{U}{z}=\mel{0}{U}{w}=\mel{1}{U}{r}=\mel{0}{U}{s}=0.
		\end{equation}
		Therefore:
		\begin{equation}
			U=e^{i\varphi}\left(\ketbra{0}{z}+\ketbra{1}{w}\right)=e^{i\vartheta}\left(\ketbra{0}{r}+\ketbra{1}{s}\right)
		\end{equation}
		for some angles $\varphi, \vartheta$. This implies that $\{\ket z,\ket w\}$ are parallel with $\{\ket r,\ket s\}$, respectively, hence $\braket{z}{s}=\braket{w}{r}=0$, thus $M_{d_a,d_b}M_{-d_a,d_b}^\dagger=M_{-d_a,-d_b}M_{d_a,-d_b}^\dagger=0$. Similarly, if $c=b$, then we find that $M_{d_a,d_b}M_{d_a,-d_b}^\dagger=M_{-d_a,-d_b}M_{-d_a,d_b}^\dagger=0$.
	\end{proof}
	Returning to our counterexample, we have:
	\begin{align}
		M_{2,2}&=\begin{bmatrix}
			A&F\\
			F^*&A^*
		\end{bmatrix}\\
		M_{2,-2}&=\begin{bmatrix}
			C&-H\\
			-H^*&C^*
		\end{bmatrix}\\
		M_{-2,-2}&=\begin{bmatrix}
			A&-F\\
			-F^*&A^*
		\end{bmatrix}
	\end{align}
	Computing the matrix products below:
	\begin{equation}
		\begin{split}
			M_{2,2}M_{2,-2}^\dagger&=\begin{bmatrix}
				A&F\\
				F^*&A^*
			\end{bmatrix}\begin{bmatrix}
				C^*&-H\\
				-H^*&C
			\end{bmatrix}=\begin{bmatrix}
				AC^*-FH^*&-AH+CF\\
				C^*F^*-A^*H^*&A^*C-F^*H
			\end{bmatrix}=\\
			&=\begin{bmatrix}
				\frac{72}{10625}(1+i)&.\\
				.&\frac{72}{10625}(1-i)
			\end{bmatrix}\neq 0,
		\end{split}
	\end{equation}
	\begin{equation}
		\begin{split}
			M_{2,-2}M_{-2,-2}^\dagger&=\begin{bmatrix}
				C&-H\\
				-H^*&C^*
			\end{bmatrix}\begin{bmatrix}
				A^*&-F\\
				-F^*&A
			\end{bmatrix}=\begin{bmatrix}
				A^*C+F^*H&-FC-AH\\
				-A^*H^*-F^*C^*&AC^*+FH^*
			\end{bmatrix}=\\
			&=\begin{bmatrix}
				\frac{72}{3625}(1+i)&.\\
				.&\frac{72}{3625}(1-i)
			\end{bmatrix}\neq 0,
		\end{split}
	\end{equation}
	by \autoref{lem:furth_nec} we conclude that the polynomial $F_{2,2}(a,b)$ given in \autoref{main} cannot be decomposed as a product \eqref{eq:laur_bi}.
\end{proof}

\section{Conclusion and open questions}
We have introduced new general homogeneous variants of multivariate quantum signal processing, and presented partial results in their characterization. In the bivariate commuting case we derived a complete characterization by drawing a connection to the univariate case, in turn showing an ``equivalence'' between Haah's general QSP~\cite{haah2018ProdDecPerFuncQSignPRoc} and a recent QSP variant introduced by Motlagh and Wiebe~\cite{motlagh2023GeneralQSP}. This connection in turn shows that the very efficient algorithm of~\cite{motlagh2023GeneralQSP} for finding the $\mathrm{SU}(2)$ modulation gate sequences can be readily applied to other variants of QSP as well.

Our MQSP variants break away from the earlier two-dimensional treatment limited by its reliance on Jordan-like decompositions, and thus might ultimately lead to the development of novel quantum algorithms. While studying non-commuting extension of the alternative MQSP variants considered by Rossi and Chuang~\cite{rossi2022multivariableQSP} could also lead to new characterizations and insights, the combinatorial freedom in the places of the variable insertions might make their algebraic treatment more difficult. 

A major open question is whether the necessary conditions for our homogeneous MQSP
\begin{enumerate}[label=\roman*)]
	\item \begin{math}
		F(x_1,\ldots,x_n)\text{ is a (non-commutative) homogeneous degree-}d \text{ polynomial},
	\end{math}
	\item $F(X_1,\ldots,X_n)\in\mathrm{U}(\mathbb{C}^n \otimes \mathcal{H})\quad\forall (X_1,\ldots,X_n)\in\bigtimes_{\ell=1}^n \mathrm{U}(\mathcal{H})\quad\forall \mathcal{H}\text{ Hilbert space},$ 
	\item $\mathrm{det}(F(X_1,\ldots,X_n))=(\mathrm{det}(X_1\cdot\ldots\cdot X_n))^d\quad\forall (X_1,\ldots,X_n)\in\bigtimes_{\ell=1}^n \mathbb{C}^{k\times k}\quad\forall k\in \mathbb{N},$
\end{enumerate}
are also sufficient for guaranteeing a product decomposition of $F(x_1,\ldots,x_n)$. Understanding this question in the $d\geq 3$ commuting and $d\geq 2$ non-commuting case, together with characterizations of particular matrix elements of $F(x_1,\ldots,x_n)$ analogous to~\cite[Lemma 4]{haah2018ProdDecPerFuncQSignPRoc} and \cite[Theorem 4]{motlagh2023GeneralQSP} would be a major step towards finding novel applications of MQSP. 

On top of analytical proofs, we developed a Python program package (see \autoref{apx:program}) to help finding counterexamples, which we also made publicly available~\cite{gitHubRepo2023}. This program and its further extensions might help finding potential counterexamples of the proposed various characterizations, if they exist.

We hope that our new MQSP variants and results about their characterization will guide further research and help finding
useful characterisations of “achievable” MQSP polynomials ultimately leading to the development of more efficient quantum algorithms.

\section*{Acknowledgements}
We are grateful to Zane Rossi for useful discussions. Part of this research project was conducted during the \href{https://coge.elte.hu/reu22.html}{REU 2022 program}, we are grateful for the support of its organisers. We also acknowledge funding by the EU's Horizon 2020 Marie Skłodowska-Curie program QuantOrder-891889 and the QuantERA II project \href{https://quantera.eu/hqcc/}{HQCC}-101017733 in coordination with the national funding organisation NKFIH.

\newpage

\bibliographystyle{alphaUrlePrint.bst}
\bibliography{Bibliography,LocalBibliography}

\begin{filecontents}[overwrite]{LocalBibliography.bib}
@misc{gitHubRepo2023,
key = {gitHubRepo},
title = {Source code available at \url{https://github.com/blankino/MQSP-examples}},
year = {2023},
}

@unpublished{mori2023,
author = {Hitomi Mori and Keisuke Fujii and Kaoru Mizuta},
title = {Comment on "{M}ultivariable quantum signal processing ({M}-{QSP}): prophecies of the two-headed oracle"},
year = {2023},
note = {\arxiv{2310.00918}},
}
\end{filecontents}

\appendix

\section{Numerical search for counterexamples}\label{apx:program}

To develop an intuition on the veracity of \cite[Conjecture~2.1]{rossi2022multivariableQSP}, we built a program in Python designed to find potential counterexamples in the bivariate case for which a product decomposition does not exist. It was through this program that we first found some approximate counterexamples before arriving at the explicit formulae in \autoref{main}.

Our code~\cite{gitHubRepo2023} consists of two main parts: first we use gradient descent to find a matrix-valued bivariate Laurent polynomial $F$ such that $F(a,b) \in SU(2)$ on the torus. Then we reduce the problem to the univariate case, and implement Haah’s univariate product decomposition.

First we want to find $F(a,b)$ such that $F(a,b) \in SU(2)$ for all $(a,b)\in \mathrm{U(1)}\times\mathrm{U(1)}$. If $F(a,b) \in SU(2)$, then it is of the form
$$F(a,b)=
\begin{bmatrix}
P(a,b) & Q(a,b)\\
-Q^*\left(a^{-1},b^{-1}\right) & P^*\left(a^{-1},b^{-1}\right)
\end{bmatrix}
$$
for some Laurent polynomials $P(a,b)$ and $Q(a,b)$. $F(a,b) \in SU(2)$ also implies 
$\left|P(a,b)\right|^2+\left|Q(a,b)\right|^2=1$ for $(a,b)\in \mathrm{U(1)}\times\mathrm{U(1)}$. Hence we need to find coefficients for $P$ and $Q$ such that
\[
f(P,Q)=\int_{\theta=0}^{2\pi}\int_{\varphi=0}^{2\pi}\left(1-\left| P\left(e^{i\theta},e^{i\varphi}\right) \right| ^2  -\left|Q\left(e^{i\theta},e^{i\varphi}\right)\right| ^2\right)^2d\varphi d\theta= 0
\] on the torus $\mathbb{T}^2=[0,2\pi)\times [0,2\pi)$. The function $f$ is a positive semi-definite quadratic form in the coefficients of the polynomials $P$ and $Q$, and it is zero if and only if $P(a,b)$ and $Q(a,b)$ are polynomials such that $F(a,b)\in\mathrm{SU}(2)$ for all $(a,b)\in\mathbb T^2$. Therefore we may use gradient descent to find the roots of $f$. The evaluation of the integral can be done efficiently using trigonometric quadrature: if the degrees of $P$ and $Q$ in $a$ and $b$ are bounded above by $N$ then $f$ can be exactly calculated by trapezoidal integration on a grid of size $(2N+1)\times (2N+1)$ of roots of unity, that is, of pairs of complex numbers of the form $\left(e^{i\frac{2\pi k}{2N+1}},e^{i\frac{2\pi l}{2N+1}}\right)$, for $(k,l)\in \{0,\dots,2N\}^2$. We initialize the gradient search from random initial coefficients.   

Now substituting $t$ for both $a$ and $b$ in $F$, we arrive at a matrix $F(t)$ which can be uniquely decomposed by Haah. If there exists a bivariate product decomposition of $F(a,b)$, it must be the case that we can obtain this decomposition by substituting $a$ or $b$ into each matrix $E_{P_k}(t)$ of the univariate Haah decomposition. The number of $a$’s and $b$’s is determined by the degree of $F$ in $a$ and $b$. We test all possible permutations of $a$’s and $b$’s, and if none of these substitutions yield a matrix ‘close enough’ to $F$ (with respect to the precision of the gradient descent), then we conclude that a decomposition is impossible.

In the case $d_A=d_B=2$, out of the $1934$ polynomials we generated, approximately $\approx 18.5 \%$ did not have a product decomposition. If a decomposition did exist, it was very likely to belong to permutations either ‘aabb’ or ‘bbaa’, both accounting for $\approx 45 \%$ of decompositions.

\section{A counterexample with rational coefficients}

We used the following Mathematica code to find and verify our rational counterexample.

\mmaDefineMathReplacement{α}{\alpha}
\mmaDefineMathReplacement{β}{\beta}
\mmaDefineMathReplacement{γ}{\gamma}
\begin{mmaCell}[functionlocal=α1,mathreplacements=bold]{Code}
\mmaUnd{γ0} = \mmaUnd{α0}*k*I;
\mmaUnd{γ1} = \mmaUnd{α1}/(k*I);
AA = 1;
FF = 1;
BB = -(\mmaUnd{α0}+\mmaUnd{α1});
CC = \mmaUnd{α0}*\mmaUnd{α1};
DD = -(\mmaUnd{γ0}+\mmaUnd{γ1});
GG = -(\mmaUnd{α0}+Conjugate[\mmaUnd{α1}]);
HH = \mmaUnd{α0}*Conjugate[\mmaUnd{α1}];
II = -(\mmaUnd{γ0}+Conjugate[\mmaUnd{γ1}]);
r = Conjugate[GG]*II+Conjugate[II]*GG-BB*Conjugate[DD]-DD*Conjugate[BB];
s = -(Conjugate[GG]*II+GG*Conjugate[II]+BB*Conjugate[DD]+DD*Conjugate[BB]);
t = -(Conjugate[AA]*BB+Conjugate[BB]*AA+BB*Conjugate[CC]+Conjugate[BB]*CC+
    +FF*Conjugate[GG]+GG*Conjugate[FF]+GG*Conjugate[HH]+Conjugate[GG]*HH);
u = II*Conjugate[HH]+HH*Conjugate[II]-FF*Conjugate[II]-Conjugate[FF]*II+
    -AA*Conjugate[DD]-DD*Conjugate[AA]-CC*Conjugate[DD]-Conjugate[CC]*DD;
x = (DD*Conjugate[AA]-AA*Conjugate[DD])*(r*CC-s*AA)+
    -(r*DD-t*AA)*(Conjugate[AA]*CC-Conjugate[CC]*AA);
y = (r*DD-t*AA)*(BB-Conjugate[BB])-(r*BB-u*AA)*(DD-Conjugate[DD]);
Solve[ x==0 && y==0 /. {k -> 3, \mmaUnd{α0} -> 1 + I}, \mmaFnc{α1}, Complexes]

\end{mmaCell}

\begin{mmaCell}{Output}
\Bigg\{\Big\{\mmaUnd{α1} -> \mmaFrac{85}{37} - \mmaFrac{29 I}{37}\Big\}, \Big\{\mmaUnd{α1} -> 9 - 10 I\Big\}\Bigg\}
\end{mmaCell}
\end{document}